\lstdefinelanguage{pseudopython}{
  morekeywords={def,return,if,else}
}
\setlist[itemize]{nosep, noitemsep, topsep=0pt}
\setlist[enumerate]{nosep, noitemsep, topsep=0pt}
\title{From Weakly-terminating Binary Agreement and Reliable Broadcast to Atomic Broadcast}
\author{Andreas Fackler \and Samuel Schlesinger \and Matthew Doty}
\institute{CasperLabs Holdings AG}
\date{}
\newcommand{\GST}{\mathrm{GST}}
\newcommand{\RB}[1]{\mathtt{RB[}#1\mathtt{]}}
\newcommand{\WBA}[1]{\mathtt{WBA[}#1\mathtt{]}}
\newcommand{\N}{\mathbb{N}}
\begin{document}

\maketitle

\begin{abstract}
We present a novel and simple solution to \emph{Atomic Broadcast}
(AB). We reduce AB to two subproblems. One of them is
\emph{Reliable Broadcast} (RB). We also introduce a subproblem we
call \emph{Weakly-terminating Binary Agreement} (WBA). WBA relaxes
\emph{Binary Agreement} (BA) protocols by not always terminating.
WBA admits much simpler solutions than BA. We discuss concrete
solutions to RB and WBA. We prove safety, liveness, and censorship
resilience of our new AB protocol.
\end{abstract}

\section{Introduction}

The core task of an \emph{Atomic Broadcast} (AB) system is to get a network
to agree on a series of transactions. In the case of a database, these
are user submitted database updates. In the case of blockchain, these
are smart contract calls or transfers of digital tokens.

Different contexts lead to different solutions to Atomic Broadcast.
The right solutions depend on how many nodes are allowed to be faulty
and on the speed and reliability of message delivery. They also use
different cryptographic primitives, providing different security models.

Many algorithms reduce Atomic Broadcast to some \emph{broadcast} and
\emph{agreement} subprotocols. In these systems, Atomic Broadcast
repeats the subprotocols over and over. First, one or more nodes
broadcast values they propose for appending to the sequence next.
Then, all nodes reach agreement on those values. After that, the nodes
start over and broadcast again.

Atomic Broadcast protocols often need agreement subprotocols because
ordinary broadcasting can be faulty. A proposing node may be malicious
and send different values to different peers. Nodes can crash. Or the
network could delay messages for very long periods of time.

\emph{Reliable Broadcast}\cite{bracha1987asynchronous} (RB) protocols
prevent malicious proposers and simplify agreement for Atomic
Broadcast. In Reliable Broadcast, \emph{correct} nodes will accept at
most one proposal. But Reliable Broadcast protocols cannot help if a
proposing node is offline. An agreement subprotocol needs to decide which
proposed values to accept.

\emph{Binary Agreement}\cite{bracha_asynchronous_1984} (BA)
complements Reliable Broadcast by deciding which proposals to accept.
In Binary Agreement, all correct nodes will output the same value of 0
or 1 at some point for a given proposal. If they all output 1, then
the Atomic Broadcast protocol adds the proposal to the output
sequence. If they all output 0 then the protocol drops the proposal.

Our contribution is this: we show we can replace Binary Agreement with
a simpler problem. We call the new class of protocol
\emph{Weakly-terminating Binary Agreement} (WBA). In WBA, all correct
nodes will output either 0 or 1, or they will not output at all. With
WBA, if all nodes receive a correct proposal in time then they will
output 1. If they all time out waiting for a correct proposal they
output 0. Otherwise, unlike Binary Agreement, they may not output at
all. This way Binary Agreement protocols are a subclass of
Weakly-terminating Binary Agreement protocols.

Our Atomic Broadcast protocol is easier to implement because it uses
WBA. This is because WBA protocols have fewer requirements than BA
protocols and thus permits simpler solutions which we will present
in this paper.

\section{Network Model}

We consider a distributed network consisting of \emph{nodes} sending
messages to each other. A known subset of $n$ nodes are the
\emph{validators}. A node is \emph{correct} if it executes the
protocols as described, otherwise it is \emph{faulty}. A faulty node
can send any message and ignore the protocols, or even collude with
other faulty nodes. We assume that strictly less than one third of the
validators are faulty, i.e. our fault tolerance $f$ satisfies
$n > 3f$.

A \emph{quorum} is a set of more than $q = \frac{n + f}{2}$
validators. Any two quorums intersect with more than
$f$ shared validators. So they always have a correct
validator in common.

We assume that the network is \emph{partially synchronous}:

All direct messages between correct nodes arrive eventually.
There is a point in time called \emph{Global Stabilization Time}
(\(\GST\)).  After the \(\GST\) messages arrive with maximum delay
\(\delta\). We assume we know \(\delta\) but we do not know \(\GST\).
This assumption is for convenience. In \cite{dwork1988consensus} the
case where we do not know \(\delta\) is considered:
By making all timeouts in the protocol increase over time,
we can accommodate for the unknown delay.

\section{Atomic Broadcast}

Throughout the following sections we describe an Atomic Broadcast
protocol. In an AB protocol some nodes are \emph{proposers}.
The proposers receive multiple inputs of some type \(\mathcal{V}\).
All nodes output multiple values of type \(\mathcal{V}\) so that:

\begin{itemize}
  \item {\bf Agreement:} If any correct node outputs $v$,
    every correct node will eventually output $v$.
  \item {\bf Total Order:} If any correct node outputs $v$ before $w$, all
    correct nodes output $v$ before $w$.
  \item {\bf Censorship Resilience:} Every input $v$ to a correct proposer is
    eventually output by the correct nodes.
\end{itemize}

Together, Agreement and Total Order are summarized as \emph{Safety}.
The additional property that it does not stop outputting values is
\emph{Liveness}, which is implied by Censorship Resilience.

Instead of directly specifying which messages to send, we will solve the AB
problem by using two subprotocols and a timer. Our AB implementation will keep
all inputs it received in a buffer, and proceed by making inputs to and
processing outputs from the subprotocols.

\section{Weakly-terminating Binary Agreement and Reliable Broadcast}

In a \emph{Weakly-terminating Binary Agreement (WBA)} protocol each validator
receives at most one single-bit input (0 or 1) and makes at most one output and:

\begin{itemize}
  \item {\bf Agreement:} If one correct node outputs $b$, all correct nodes
    eventually output $b$.
  \item {\bf Validity:} If the correct nodes output $b$, more than $q - f$
    correct validators had input $b$.
  \item {\bf Weak Termination:} If more than $q$ correct validators have input
    $b$, the correct nodes eventually output $b$.
\end{itemize}

In \emph{Reliable Broadcast (RB)} a designated \emph{proposer} receives one
input and each node makes at most one output and:

\begin{itemize}
  \item {\bf Agreement:} If one correct node outputs $v$, all correct nodes
    eventually output $v$.
  \item {\bf Weak Termination:} If the proposer is correct and has input $v$,
    the correct nodes eventually output $v$.
\end{itemize}

WBA and RB have asynchronous solutions that do not rely on the partial synchrony
assumption. But if there are upper bounds for message delays, these usually give
rise to upper bounds for how much time WBA and RB take:

We say WBA \emph{has delay $\Delta$} if:

\begin{itemize}
  \item If more than $q$ correct validators get the same input before time
    $t \geq \GST$, then all correct nodes output before $t + \Delta$.
  \item If any correct node outputs at time $t \geq \GST$, then all correct
    nodes output before $t + \Delta$.
\end{itemize}

And RB \emph{has delay $\Delta$} if:

\begin{itemize}
  \item If a correct proposer gets input before time $t \geq \GST$, then all
    correct nodes output before $t + \Delta$.
  \item If any correct node outputs at time $t \geq \GST$, then all correct
    nodes output before $t + \Delta$.
\end{itemize}

We present two algorithms for both RB and WBA in section
\ref{sectionRbWbaSolutions}.

\section{Reducing AB to RB and WBA}

\subsection{Idea}

Our Atomic Broadcast solution is leader-based, following
\cite{buchman2018latest,castro1999practical}. The idea is to proceed
in a sequence of rounds. Each round has a designated proposer node
which is the leader. The leader proposes the next value for all nodes
to output. The leader sequence could be pseudorandom or round-robin.
What matters for liveness is that every proposer is the leader of
infinitely many rounds.

We use an RB instance $\RB{r}$ in every round $r$, so it is guaranteed that all
nodes receive the same proposed value --- or none, if the leader is faulty! To
avoid waiting indefinitely for $\RB{r}$ we use a timeout, and a WBA instance
$\WBA{r}$, where every validator inputs $1$ if they receive an acceptable proposal in
time or $0$ if they hit the timeout while waiting.  If $\WBA{r}$ outputs $1$,
that is a decision to \emph{finalize} it, i.e. return it as the next AB output.

Neither RB nor WBA guarantee that they output anything. But the crucial
observation is that if $\RB{r}$ does not output, all the correct validators will
hit the timeout and input $0$ in $\WBA{r}$. In that case, $\WBA{r}$ \emph{does}
output $0$, because of Weak Termination. Thus the protocol can avoid getting
stalled in round $r$ by allowing the next leader in round $r + 1$ to make a
proposal as soon as $\RB{r}$ has output an acceptable proposal \emph{or}
$\WBA{r}$ has output $0$.

The devil is in the details: It can happen that $\RB{r}$ does output an
acceptable proposal and $\WBA{r}$ outputs $0$ anyway (because the timeout hit
first for too many validators). But all nodes need to agree on whether the
round-$r$ proposal should be output before outputting the one from $r + 1$.
That's why the proposal in round $r + 1$ needs to specify whether round $r$
should be skipped.

\subsection{The Algorithm}

For each round $r \in \N$, there is an RB instance $\RB{r}$ with the designated
round-$r$ leader as proposer, and a WBA instance $\WBA{r}$.  Values in $\RB{r}$
are pairs $(v, s)$ that roughly mean: ``I propose outputting $v$ right after the
value from round $s$.'' If $s$ is omitted --- we use the symbol $\bot$ --- that
means it is proposed as the first value. Values in $\WBA{r}$ are $0$ or $1$,
where $0$ means: ``I did not get a proposal in time and vote to allow the next
proposer to skip round $r$.'' And $1$ means: ``I got a proposal and vote for
finalizing it.''

We call round $r$ \emph{committed} if $\WBA{r}$ has output $1$, and
\emph{skippable} if $\WBA{r}$ has output $0$.

For any pair $(v, s)$ with $s \in \N$, we call $s$ the \emph{parent round}; if
$\RB{s}$ has an output, we call that the \emph{parent} of $(v, s)$. The
\emph{ancestors} of $(v, s)$ are its parent and all ancestors of its parent. A
pair $(v, \bot)$ has no parent or ancestors.  If $(v, s)$ with $s \neq \bot$ is
the output of $\RB{r}$, we also call $s$ the \emph{parent} of $r$. The
\emph{ancestors} of a round are its parent and all of its parent's ancestors.

$\bot$ is \emph{fertile} in round $r \in \N$ if all rounds $t < r$ are
skippable. Some $s \in \N$ is \emph{fertile} in round $r$ if $s <
r$, $\RB{s}$ has a fertile output and all rounds $t$ with $s < t < r$ are
skippable. If $\RB{r}$ has a fertile output, then that output is
\emph{accepted} in round $r$.

If $(v, s)$ is accepted in round $r$ and $r$ is committed, then $(v, s)$ and all its
ancestors are \emph{finalized}. We also call $r$ and all of $(v, s)$'s ancestor
rounds finalized. (Note that not all of those ancestor rounds are necessarily
committed.)

The \emph{current round} is the lowest round $r$ which is neither skippable nor
has an accepted value.

We assume that RB and WBA have delay $\Delta$ after $\GST$.

We formulate the protocol by specifying what actions to take whenever certain
conditions become true in a node $N$. These need to be checked whenever the
timer fires or one of the subprotocols outputs. When we write ``input'' in some
RB/WBA, we mean ``input unless we have already made an input earlier''.

\begin{itemize}
  \item When $N$ is leader in $r$, and has an input $v$ that has not been
    finalized yet, and there is a fertile $s$ in $r$,
    it inputs $(v, s)$ in $\RB{r}$.
  \item When a new round $r$ becomes current, $N$ cancels and restarts
    the timer, with delay $2 \Delta$.
  \item When the timer fires and $r$ is current, $N$ inputs $0$ in $\WBA{r}$.
  \item When there is an accepted value in a round $r$, $N$ inputs $1$ in
    $\WBA{r}$.
  \item When new values become finalized, $N$ outputs them, lower rounds first.
\end{itemize}

\subsection{Pseudocode}

For simplicity we assume that all subprotocols ignore unexpected inputs.
Equivalently, S.{\tt input}(x) means we input $x$ into subprotocol $S$, \emph{if}
it expects an input, i.e. if we have not made an input before and, in the case of
RB, if we are the designated proposer. Outside the subprotocols, only three
variables are needed. Upon startup, we initialize the timer and call the main handler:

\begin{lstlisting}[language = pseudopython]
def initialize():
  current: $\mathbb{N}$ = 0
  undecided_round: $\mathbb{N}$ = 0
  inputs: List[$\mathcal{V}$] = []

  start_timer(2 * $\Delta$)
  on_subprotocol_output()
\end{lstlisting}

We first implement some definitions and a helper function for finalizing values
in the right order:

\begin{lstlisting}[language = pseudopython]
def fertile($r$: $\mathbb{N}$, $s$: Option[$\mathbb{N}$]) $\to$ $\mathbb{B}$:
  if $s$ == $\bot$:
    return $\forall t < r.\;$ WBA[$t$].output == 0
  else:
    return $s < r \;\wedge\;(\forall u.\; s < u < r \longrightarrow $ WBA[$u$].output == 0$) \;\wedge\; \exists t.$ accepted($s$, $t$)

def accepted($r$: $\mathbb{N}$, $s$: Option[$\mathbb{N}$]) $\to$ $\mathbb{B}$:
  return fertile($r$, $s$) $\wedge$ $\exists v.$ RB[$r$].output == $(v, s)$

def finalize($r$: $\mathbb{N}$):
  ($v$, $t$) = RB[$r$].output
  if $t \neq \bot$ $\wedge$ $t$ $\geq$ undecided_round:
    finalize(t)
  inputs.remove($v$)
  output($v$)
\end{lstlisting}

Finally the main part of the protocol: a handler for inputs, i.e. user-submitted
transactions, one for the timer, and one for whenever any WBA or RB subprotocol
instance outputs a value.

\begin{lstlisting}[language = pseudopython]
def on_input($v$: $\mathcal{V}$):
  inputs = [$v$] + inputs
\end{lstlisting}

\begin{lstlisting}[language = pseudopython]
def on_timeout():
   WBA[current].input(0)
\end{lstlisting}

\begin{lstlisting}[language = pseudopython]
def on_subprotocol_output():
  if RB[current].output $\neq \bot$ $\vee$ WBA[current].output == 0:
    current = current + 1
    restart_timer(2 * $\Delta$)

  if len(inputs) > 0 $\wedge$ $\exists s \in$ Option[$\mathbb{N}$]. fertile(current, $s$):
    RB[current].input((inputs[0], $s$))

  if $\exists s \in \N, t \in$ Option[$\mathbb{N}$]. accepted($s$, $t$):
    WBA[$s$].input(1)

  if $\exists s \in \N, t \in$ Option[$\mathbb{N}$]. ($s$ $\geq$ undecided_round $\wedge$ accepted($s$, $t$) $\wedge$ WBA[$s$].output == 1):
    finalize($s$)
    undecided_round = $s + 1$
\end{lstlisting}

\subsection{Proofs}

We described the algorithm from the point of view of one node, or of someone who
implements it. The proofs concern the behavior of a whole network, so we have to
distinguish between the protocol states in different nodes at different times.
We write $r$ is \emph{$(N, t)$-committed} to say node $N$ at time $t$ sees round
$r$ as committed. We write $r$ is \emph{$t$-committed} to say it is $(N,
t)$-committed for every correct node $N$. Analogous definitions apply to the
notions \emph{skippable}, \emph{accepted} and \emph{finalized}.

\begin{lemma}\label{lemmaCommitted}
Let $N$ be a correct node, and let round $r$ be $(N, t)$-committed.
\begin{enumerate}
\item\label{lemmaFutureCommitted} Then $r$ is $(N, t')$-committed for all $t' > t$.
\item\label{lemmaAgreeCommitted} There is a $t'$ such that $r$ is $(N', t')$-committed for all correct nodes $N'$.
\item\label{lemmaDelayCommitted} If RB and WBA have delay $\Delta$ and $t \geq \GST$ then $r$ is
      $(t + \Delta)$-committed.
\end{enumerate}
The same holds true for the properties \emph{skippable}, \emph{accepted} and \emph{finalized}.
\end{lemma}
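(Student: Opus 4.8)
The three clauses of the lemma mirror three structural facts about the underlying subprotocols: clause (\ref{lemmaFutureCommitted}) reflects that WBA and RB each make \emph{at most one} output, so an output, once seen, is permanent; clause (\ref{lemmaAgreeCommitted}) reflects the \textbf{Agreement} property of WBA and RB; and clause (\ref{lemmaDelayCommitted}) reflects their \emph{delay $\Delta$} guarantees. The plan is to establish the lemma first for the two ``atomic'' properties \emph{committed} and \emph{skippable}, which are literally ``\WBA{r} has output $1$'' and ``\WBA{r} has output $0$'' at node $N$, and then to bootstrap from these to \emph{accepted} and \emph{finalized}, whose definitions are built from them.

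For \emph{committed} (the case $b = 1$; \emph{skippable} is identical with $b = 0$): since $r$ is $(N,t)$-committed iff \WBA{r} at $N$ has output $1$ by time $t$, clause (\ref{lemmaFutureCommitted}) is immediate because WBA makes at most one output, so the output cannot change or disappear. Clause (\ref{lemmaAgreeCommitted}) is exactly WBA \textbf{Agreement}: once the correct node $N$ outputs $1$, every correct node eventually outputs $1$, and we take $t'$ to be the maximum of these (finitely many) output times. Clause (\ref{lemmaDelayCommitted}) follows from the WBA delay bound: let $t_0 \leq t$ be the time $N$ output; if $t_0 \geq \GST$ then all correct nodes output before $t_0 + \Delta \leq t + \Delta$, and if $t_0 < \GST$ we instead invoke the delay guarantee from $\GST \leq t$, again giving an output time before $t + \Delta$.

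Next I would prove all three clauses for \emph{accepted} simultaneously by strong induction on the round number $r$. Unfolding the definition, $r$ being $(N,t)$-accepted means $N$ sees a fertile output $(v,s)$ of \RB{r} at time $t$; fertility asserts that \WBA{u} has output $0$ for each round $u$ in the relevant range (finitely many \emph{skippable} conditions) and, when $s \in \N$, that $s$ is $(N,t)$-accepted. Each ingredient satisfies the three clauses: the \RB{r} output by permanence, \textbf{Agreement}, and delay of RB; the \emph{skippable} conditions by the already-proved case; and the parent's accepted-ness by the induction hypothesis applied to $s < r$. Combining them yields the three clauses for $r$; for clause (\ref{lemmaDelayCommitted}) the crucial point is that \emph{all} ancestors are already visible to $N$ at the single time $t$, so the induction hypothesis can be applied to the parent with the same anchor $t$, giving $(t+\Delta)$-accepted uniformly rather than accumulating a $\Delta$ per level of ancestry.

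Finally, \emph{finalized} I would handle by characterizing a round $r'$ as finalized at $(N,t)$ exactly when $N$ sees some round $r \geq r'$ that is both accepted and committed and of which $r'$ is an ancestor (or $r' = r$) along the parent chain; the ancestor chain is itself determined by the permanent, agreed, delay-bounded RB outputs traced out by the \texttt{finalize} recursion. Since both \emph{accepted} and \emph{committed} satisfy all three clauses and the ancestor relation inherits them, \emph{finalized} does too. The main obstacle is clause (\ref{lemmaDelayCommitted}) for \emph{accepted}: one must confirm the delay does not grow with the length of the ancestor chain, which is exactly what the ``simultaneously visible at time $t$'' observation secures; the only other delicate point is the pre-$\GST$ case, dispatched by the $\GST \leq t$ argument above.
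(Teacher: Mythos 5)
Your proof is correct, and for the case the paper actually proves it follows the same route: the paper's proof treats only \emph{committed}, deriving clause (\ref{lemmaFutureCommitted}) from the fact that WBA makes at most one output, clause (\ref{lemmaAgreeCommitted}) from WBA Agreement, and clause (\ref{lemmaDelayCommitted}) from the delay bound --- exactly your first two paragraphs. Where you genuinely go beyond the paper is in the remaining properties: the paper simply asserts ``the same holds true'' for \emph{skippable}, \emph{accepted} and \emph{finalized}, whereas you carry out the strong induction on the round number that unfolds the recursive definition of \emph{accepted} (an RB output, finitely many skippable side conditions, and an accepted parent), and then reduce \emph{finalized} to an accepted-and-committed round together with the ancestor chain traced out by RB outputs. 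Your observation that clause (\ref{lemmaDelayCommitted}) for \emph{accepted} does not degrade with the depth of the ancestor chain --- because all ancestors are visible to $N$ at the single time $t$, so the induction hypothesis can be invoked with the same anchor $t$ rather than accumulating one $\Delta$ per level --- is precisely the point a careful reader must verify and that the paper's one-line treatment leaves implicit. The one informality you share with the paper is the reading of the delay property when $N$'s own output predates $\GST$: as stated, the property quantifies over output events occurring at times $\geq \GST$, so strictly one must read ``outputs at time $t \geq \GST$'' as ``has output by time $t \geq \GST$''; you at least flag and dispatch this case, which the paper does not mention at all.
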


\begin{proof}
We only prove each statement for the committed case. A round $r$ is $(N, t)$-committed
when $\WBA{r}$ has output $1$. By definition, RB and WBA only output one item, showing
\eqref{lemmaFutureCommitted}. By the Agreement property of WBA, all correct nodes
will eventually see the same output, which shows \eqref{lemmaAgreeCommitted}.
Assuming we are after GST, and RB and WBA only take time $\Delta$, we know that
every correct node $N'$ will be able to infer that same commitment within time
$\Delta$, as that happens through RB and WBA, proving \eqref{lemmaDelayCommitted}.
\end{proof}

The only aspect in which different correct nodes' views can differ is the order
in which these properties are seen. However, finality was defined such that it
does have a monotonicity property anyway, i.e. if a round $r_2$ becomes
finalized later than $r_1$, then $r_2 > r_1$:

\begin{lemma}
Let $N$ be a correct node.

If both $r_1$ and $r_2$ are $(N, t)$-finalized they are either equal or
ancestors of each other.

If $r_1$ is $(N, t_1)$-finalized and $r_2$ is $(N, t_2)$-finalized but not $(N,
t_1)$-finalized then $t_2 > t_1$, $r_2 > r_1$ and $r_1$ is an ancestor of $r_2$.
\end{lemma}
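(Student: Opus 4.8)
The plan is to isolate one structural fact and read both statements off it. Fix a correct node $N$ and a time $t$. The crux is the following claim: if two rounds $r_1' \le r_2'$ are both \emph{committed} and carry an \emph{accepted} value in the view $(N,t)$, then $r_1'$ equals $r_2'$ or is an ancestor of it. Granting this, the finalized rounds at $(N,t)$ will turn out to lie on a single ancestor chain, and both assertions follow.

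To prove the claim I would unfold fertility along the accepted value of $r_2'$. Writing that value as $(v_2, s_2)$, the fact that $s_2$ is fertile in $r_2'$ forces every round strictly between $s_2$ and $r_2'$ to be skippable; recursing down the ancestor chain $r_2' > s_2 > \cdots$ shows that every round $\le r_2'$ which is neither $r_2'$ itself nor one of its ancestors is skippable. Since $\WBA{r_1'}$ makes at most one output, a committed round (output $1$) is never skippable (output $0$); hence the committed round $r_1' \le r_2'$ must coincide with one of the chain elements, i.e. it is $r_2'$ or an ancestor of it. I expect this to be the main obstacle: the delicate point, exactly the subtlety flagged in the informal discussion, is that a round can be simultaneously skippable and carry an accepted value, so the argument must invoke \emph{committed} rather than merely \emph{accepted}, and the descent down the fertility chain has to be made into a clean induction.

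With the claim established, the first assertion is immediate. The committed-and-accepted rounds in the view $(N,t)$ form a chain under the ancestor relation, so taking their maximum $r^\star$, the claim shows every such round is an ancestor of $r^\star$ or equals it. Because finalized rounds are precisely the committed-and-accepted rounds together with all their ancestors, the whole finalized set reduces to $\{r^\star\}$ together with the ancestors of $r^\star$, which is a single ancestor chain; any two of its members are therefore equal or ancestors of each other.

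For the second assertion I would argue in three moves. First, $t_2 > t_1$: by the monotonicity statement \ref{lemmaFutureCommitted} of Lemma~\ref{lemmaCommitted}, applied in the finalized case, finalization persists; were $t_2 \le t_1$ then $r_2$ would already be $(N,t_1)$-finalized, contradicting the hypothesis. Second, at time $t_2$ both $r_1$ and $r_2$ are finalized, so the first assertion makes them comparable, and they are not equal since $r_1$ is $(N,t_1)$-finalized while $r_2$ is not; hence one is an ancestor of the other. Third, to pin the direction I would use that finalizing a round finalizes all of its ancestors at the same instant, so when $r_1$ became $(N,t_1)$-finalized all ancestors of $r_1$ did too; if $r_2$ were an ancestor of $r_1$ it would then be $(N,t_1)$-finalized, a contradiction. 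Therefore $r_1$ is an ancestor of $r_2$, and since ancestors are strictly smaller, $r_2 > r_1$.
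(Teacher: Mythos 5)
Your proposal is correct and follows essentially the same route as the paper: both reduce finalized rounds to committed rounds with accepted values, use that a committed round cannot be skippable (WBA outputs at most once) while fertility forces every round off the ancestor chain to be skippable, and then derive the second assertion from monotonicity of finalization plus the fact that finalizing a round finalizes all its ancestors. Your full descent down the fertility chain is a minor cosmetic variant of the paper's argument, which instead picks the minimal ancestor $s$ of $r'_2$ with $s > r'_1$ and case-splits on its parent.
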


\begin{proof}
For the first claim we omit the prefix $(N, t)$ since it's only about one node
and one point in time.

That $r_1$ is finalized means that it is equal to or an ancestor of some round
$r'_1$ that has accepted a value and is committed. Similarly $r_2$ is equal to
or ancestor of a committed $r'_2$ with accepted value. If $r'_1 = r'_2$ then
both $r_1$ and $r_2$ are ancestors of that round and therefore equal or
ancestors of each other. So assume now that $r'_1 \neq r'_2$, w.l.o.g. $r'_2 >
r'_1$. Let $s$ be minimal among $r'_2$ and its ancestors such that $s > r'_1$.
By the recursive definition of \emph{accepted}, since $r'_2$ has an accepted value,
so does $s$. Let $(v, s')$ be accepted in $s$. If $s' > r'_1$ that would
contradict the minimality of $s$. If $s' < r'_1$ that would mean $r'_1$ must be
skippable, which is impossible since $r'_1$ is committed. Therefore $r'_1 = s'$.
Hence $r'_1$ is also an ancestor of $r'_2$. Thus both $r_1$ and $r_2$ are
ancestors of or equal to $r'_2$, so they are ancestors of or equal to each
other.

For the second claim note that $t_2 > t_1$ because the property of being
finalized can only become true with more RB/WBA outputs arriving, not false
again.

In particular both $r_1$ and $r_2$ are $(N, t_2)$-finalized. By the first part
of the lemma that means they are ancestors of each other. If $r_1$ were greater
than $r_2$, $r_1$ by definition could not be $(N, t_1)$-finalized without $r_2$
also being $(N, t_1)$-finalized. Hence $r_2 > r_1$.
\end{proof}

In other words the rounds are observed as finalized in increasing order; it
cannot happen that a node first sees $r$ as finalized, and then later $s < r$.
It follows from the Agreement property of RB and WBA that the set of rounds that
are eventually finalized is the same in all nodes. Hence in every correct node
the $k$-th output is exactly the $k$-th element in the set of all rounds that
eventually get finalized. That proves:

\begin{theorem}[Safety; Agreement and Total Order]
If any correct node's $k$-th output is $v$, then every correct node will
eventually output $k$ values and the $k$-th one is $v$.\qed
\end{theorem}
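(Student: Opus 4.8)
The plan is to identify every correct node's output stream with one node-independent sorted list and then simply read off its $k$-th entry. First I would fix the set $F$ of rounds that eventually get finalized; by Lemma~\ref{lemmaCommitted}\eqref{lemmaAgreeCommitted}, applied to the \emph{finalized} property (the lemma explicitly covers it), whenever a round is $(N,t)$-finalized for one correct node it is eventually $(N',t')$-finalized for every correct node, so $F$ is well-defined independently of the node. The predicate ``$r$ is finalized'' is moreover a deterministic function of which rounds are committed, skippable and accepted, and these are read off only from $\RB{\cdot}$ and $\WBA{\cdot}$ outputs; hence by the Agreement properties of RB and WBA the accepted value $v$ attached to each round of $F$ is node-independent as well.

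Next I would show that each correct node emits exactly the rounds of $F$, in increasing order of round number, each contributing its accepted value once. The second lemma already supplies the ordering: a node never sees $s$ finalized after a strictly larger $r$, i.e.\ finalizations are observed in increasing order. Combined with the behavior of \texttt{finalize} guarded by \texttt{undecided\_round} --- which recurses into strictly smaller finalized ancestors before emitting $v$ and never revisits a round below \texttt{undecided\_round} --- this pins the output stream down to the elements of $F$ listed by increasing round, so position $k$ corresponds to the $k$-th smallest round of $F$ and its value is that round's accepted $\RB{\cdot}$ output.

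Finally I would combine the two steps. If some correct node's $k$-th output is $v$, then $F$ has at least $k$ elements and its $k$-th smallest round $r$ carries accepted value $v$. Since $F$ and the per-round accepted values are node-independent, every correct node eventually finalizes these same rounds, hence eventually produces at least $k$ outputs, and by the ordering established in step two its $k$-th output is exactly the accepted value of $r$, namely $v$.

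The part I expect to need the most care is the second step: the abstract claim ``the outputs are the rounds of $F$ in increasing order, each exactly once, with no finalized round skipped'' is really a statement about the \texttt{finalize}/\texttt{undecided\_round} machinery rather than about the subprotocols. The monotonicity lemma gives the ordering of \emph{when} rounds become finalized, but tying that to the code-level invariant --- that the guard $s \geq \texttt{undecided\_round}$ emits each round once and that the recursion handles all smaller finalized ancestors first --- is the one place where a genuine argument is required instead of a direct appeal to the preceding lemmas.
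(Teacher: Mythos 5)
Your proposal is correct and follows essentially the same route as the paper: the paper's own (inline) proof likewise combines the monotonicity lemma (finalizations observed in increasing round order) with the Agreement properties of RB and WBA (so the set of eventually finalized rounds, and their accepted values, is node-independent) to conclude that every correct node's $k$-th output is the value of the $k$-th round in that common set. The only difference is one of explicitness: the paper compresses your step two --- that the \texttt{finalize}/\texttt{undecided\_round} machinery really emits exactly the finalized rounds, each once, in increasing order --- into a single sentence, whereas you correctly flag it as the place where a genuine code-level argument is needed.
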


So far we did not use the partial synchrony assumption, which indeed is not
needed for safety. From now on assume that RB and WBA have delay $\Delta$.

\begin{lemma}\label{lemmaCurrentRound}
If all correct nodes start before $\GST$, then for every $r$, all rounds $s < r$
are $(\GST + 3 r \Delta)$-skippable or have a $(\GST + 3 r \Delta)$-accepted
value. In particular, at time $\GST + 3 r \Delta$, every correct node's current
round is $\geq r$.
\end{lemma}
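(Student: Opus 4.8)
The plan is to prove the displayed claim by induction on $r$, since the ``in particular'' statement is an immediate consequence: throughout, call a round \emph{resolved} if it is skippable or has an accepted value, so the current round is the lowest unresolved round; if every round $s < r$ is $(\GST + 3r\Delta)$-resolved then every correct node has resolved all rounds below $r$ and hence has current round $\ge r$. Write $T_r = \GST + 3r\Delta$. The base case $r = 0$ is vacuous. For the step I assume the statement for $r$ and prove it for $r+1$, i.e. that every round $s \le r$ is $T_{r+1}$-skippable or $T_{r+1}$-accepted, where $T_{r+1} = T_r + 3\Delta$. For the rounds $s < r$ this is immediate from the induction hypothesis together with the time-monotonicity of Lemma \ref{lemmaCommitted}, item \ref{lemmaFutureCommitted}: being skippable or accepted never becomes false again, so those rounds stay resolved at $T_{r+1} \ge T_r$. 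The entire burden is therefore to show that round $r$ itself is $T_{r+1}$-resolved.

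The key timing fact I would record first is a consequence of the induction hypothesis: at time $T_r$ every correct node has current round $\ge r$. For any correct node $N$ whose current round equals $r$ at $T_r$, round $r$ became current at some time $\tau_N \le T_r$; by the protocol the timer was restarted with delay $2\Delta$ at $\tau_N$, and since the current round never decreases (again Lemma \ref{lemmaCommitted}, item \ref{lemmaFutureCommitted}) this timer is never reset before round $r$ is resolved, so it fires no later than $\tau_N + 2\Delta \le T_r + 2\Delta$.

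I then split into two cases. \emph{Case 1:} some correct node sees round $r$ as resolved at a time $t_N \le T_r + 2\Delta$. Applying Lemma \ref{lemmaCommitted}, item \ref{lemmaDelayCommitted}, to whichever of skippable/accepted holds (first bumping $t_N$ up to $\GST$ via item \ref{lemmaFutureCommitted} if it occurred before $\GST$), round $r$ is resolved for every correct node by $\max(t_N,\GST) + \Delta \le T_r + 3\Delta = T_{r+1}$. \emph{Case 2:} no correct node sees round $r$ as resolved at any time up to $T_r + 2\Delta$. Then round $r$ is current for every correct node throughout the interval in which its timer is pending, so each such timer fires while $r$ is still current and the node inputs $0$ into $\WBA{r}$; it cannot have input $1$ earlier, since that would require round $r$ to be accepted, hence resolved. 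By the timing fact, all of the at least $n-f$ correct validators input $0$ into $\WBA{r}$ by $T_r + 2\Delta$, and $n > 3f$ gives $n - f > \tfrac{n+f}{2} = q$, so more than $q$ correct validators input $0$; the delay property of WBA then forces an output of $0$ before $(T_r + 2\Delta) + \Delta = T_{r+1}$, making round $r$ $\,T_{r+1}$-skippable.

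The main obstacle I anticipate is the timing bookkeeping rather than any conceptual difficulty: one must argue carefully that the round-$r$ timer is genuinely started no later than $T_r$ (this is exactly where the ``current round $\ge r$'' consequence of the induction hypothesis is used) and is not reset before it fires, so that the $2\Delta$ timeout plus the $\Delta$ of WBA delay really fit inside the $3\Delta$ budget; the strict-versus-nonstrict inequalities in the delay definitions are routine and introduce only negligible slack. The one further point to keep straight is that ``resolved'' must consistently mean ``skippable or accepted'', matching the current-round definition, and that the cross-node propagation of the accepted status is precisely what Lemma \ref{lemmaCommitted} supplies, so that neither the identity nor the correctness of the round-$r$ leader ever enters the argument.
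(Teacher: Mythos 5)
Your proof is correct and follows essentially the same route as the paper's: induction on $r$, observing that the induction hypothesis forces every correct node's round-$r$ timer to be started by $\GST + 3r\Delta$, then splitting on whether some correct node resolves round $r$ within $2\Delta$ (propagated to all nodes via the delay property of Lemma \ref{lemmaCommitted}) or else all correct validators time out and input $0$, whence WBA Weak Termination with delay $\Delta$ makes the round skippable everywhere. Your version is in fact slightly more careful than the paper's, since your Case 1 uses ``resolved'' (skippable \emph{or} accepted) where the paper's case split only mentions ``accepted,'' thereby explicitly covering the possibility that round $r$ becomes skippable at some node before its timer fires --- a case the paper's prose silently absorbs.
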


\begin{proof}
We show this by induction on $r$.

The base case is trivial since there are no rounds before the first one.

So let $r > 0$ and $t' = \GST + 3 r \Delta$, and assume the induction
hypothesis: All rounds $s < r - 1$ are $(t' - 3 \Delta)$-skippable or have a
$(t' - 3 \Delta)$-accepted value.

That means all correct validators start their timer for round $r - 1$ before $t'
- 3 \Delta$.

If round $r - 1$ has a $(N', t' - \Delta)$-accepted value for some correct $N'$,
it has a $t'$-accepted value.

Otherwise all correct validators will hit the timeout before $t' - \Delta$ and
input $0$ in $\WBA{r - 1}$, so WBA outputs $0$ before $t'$ and $r - 1$ is
$t'$-skippable.

By Lemma \ref{lemmaCommitted}, all rounds $s < r - 1$ are also $t'$-skippable or
have a $t'$-accepted value, proving the statement for $r$.
\end{proof}

\begin{lemma}\label{lemmaFinalization}
Let $R$ be the highest round that is current in any correct node at GST. Every
round $r > R$ with a correct leader that has an unfinalized input eventually has
an accepted value and becomes committed.
\end{lemma}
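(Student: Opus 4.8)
The plan is to prove the two conclusions --- that round $r$ eventually has an accepted value, and that it becomes committed --- by following the round through three phases: first it becomes \emph{current} at every correct node after $\GST$, then its correct leader gets a value accepted via $\RB{r}$, and finally that value reaches everyone in time to force $\WBA{r}$ to output $1$. Since $r > R$, no correct node has round $r$ current at $\GST$, so everything below happens after $\GST$ and the delay bounds apply. For the first phase I would invoke Lemma~\ref{lemmaCurrentRound} (and its underlying argument: a round that stays current without an accepted value eventually collects a $0$ from every correct validator and becomes skippable by Weak Termination), which advances the current round through every value and in particular makes round $r$ current at each correct node. Because the event that makes round $r$ current --- the resolution of round $r-1$ --- propagates within $\Delta$ by Lemma~\ref{lemmaCommitted}, the times $t_N$ at which round $r$ becomes current at the various correct nodes $N$ all lie in an interval $[t_0, t_0 + \Delta]$ with $t_0 \geq \GST$. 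I would also record here that the correct validators number $n - f > q$, which is where $n > 3f$ enters.

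Next I would produce the accepted value. When round $r$ becomes current at the correct leader $L$ at some $t_L \in [t_0, t_0 + \Delta]$, all rounds below $r$ are resolved in $L$'s view, so a fertile $s$ exists; since $L$ has an unfinalized input $v$, it inputs $(v, s)$ into $\RB{r}$. By the RB delay bound, every correct node outputs $(v, s)$ within $\Delta$ of $t_L$, and by the \emph{accepted} and \emph{skippable} cases of Lemma~\ref{lemmaCommitted} every correct node also sees $s$ as fertile within $\Delta$ of $t_L$; hence $(v, s)$ is accepted in round $r$ at every correct node by time $t_L + \Delta$. This already gives the ``has an accepted value'' half of the statement.

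Finally I would argue commitment through the timers. Each correct node restarts its timer for $2\Delta$ when round $r$ becomes current at it, so node $N$'s timeout fires at $t_N + 2\Delta \geq t_0 + 2\Delta$, whereas $N$ sees the accepted value by $t_L + \Delta \leq t_0 + 2\Delta$. Thus the accepted value arrives no later than the timeout, and a round that has an accepted value is no longer current; so when its timer fires, no correct validator inputs $0$, and each instead inputs $1$ because it sees the accepted value. That is more than $q$ correct inputs of $1$, so by Weak Termination of WBA the instance $\WBA{r}$ outputs $1$ and round $r$ is committed.

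The hard part is this last timing step. The one-$\Delta$ spread of the ``current'' times and the one-$\Delta$ RB delivery delay together exactly exhaust the $2\Delta$ timeout budget, so in the extremal case --- the node that entered round $r$ earliest while the leader entered latest --- acceptance and timeout coincide, and the argument must lean on the fact that a round with an accepted value is no longer current, so a simultaneous timeout yields no $0$-input. This tightness genuinely matters: when $n = 3f + 1$ one has $n - f = q + \tfrac{1}{2}$, so more than $q$ correct $1$-inputs forces \emph{all} correct validators to input $1$, and a single premature $0$ would be fatal. Making this edge case airtight --- equivalently, justifying that the $2\Delta$ timeout is chosen precisely to cover one $\Delta$ of clock spread plus one $\Delta$ of broadcast --- is the crux of the proof.
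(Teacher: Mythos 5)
Your proposal follows essentially the same route as the paper's proof: invoke Lemma~\ref{lemmaCurrentRound} to get every correct node past round $r$ after $\GST$, take $t_0$ to be the time the \emph{first} correct node gets there, use Lemma~\ref{lemmaCommitted} to bound the spread by $\Delta$, have the leader propose so that the value and its parent's fertility reach everyone within another $\Delta$, observe that no correct validator can input $0$ in $\WBA{r}$ before $t_0 + 2\Delta$, and close with Weak Termination (all $n - f > q$ correct validators input $1$). One correction, though, on what you call the crux: the ``extremal case'' where acceptance and timeout coincide at $t_0 + 2\Delta$ is an artifact of your non-strict bookkeeping, not a real feature of the proof. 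The delay definitions in the paper are strict --- ``all correct nodes output \emph{before} $t + \Delta$'' --- so the leader proposes strictly before $t_0 + \Delta$ and every correct node sees the accepted value strictly before $t_0 + 2\Delta$, whereas a $0$-input requires a timer started no earlier than $t_0$ and hence fires no earlier than $t_0 + 2\Delta$. The separation is strict and no tie ever occurs. This matters because your proposed tie-breaker (``a simultaneous timeout yields no $0$-input since the round is no longer current'') silently assumes that when a timer event and an RB output arrive at the same instant, the node processes the output first; nothing in the protocol description licenses that ordering, so had the tie been real, your argument would have a hole. With the strict inequalities restored, your proof is the paper's proof.
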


\begin{proof}
By Lemma \ref{lemmaCurrentRound}, all correct nodes eventually reach a round
$\geq r$. Let $N$ be the first correct node to do so, i.e. the node with the
minimal $t$ such that all rounds $s < r$ are $(N, t)$-skippable or have an $(N,
t)$-accepted value. So no correct node starts its timer for round $r$ before
time $t$ and no correct node inputs $0$ in $\WBA{r}$ before $t + 2 \Delta$.

By our assumption $t > \GST$, so by Lemma \ref{lemmaCommitted} all correct nodes
reach round $\geq r$ before $t + \Delta$.

In particular if the leader $L$ in round $r$ is correct and has an input $v$
that is not finalized yet, it will propose it before $t + \Delta$, and $\RB{r}$
will output it in all correct nodes before $t + 2 \Delta$. Since the parent of
the proposal is $(L, t + \Delta)$-accepted, it will be $(t + 2
\Delta)$-accepted. Thus the proposal itself is also $(t + 2 \Delta)$-accepted,
and each correct validator inputs $1$ in $\WBA{r}$, so $r$ becomes committed.
\end{proof}

The requirement for the leader sequence was that every proposer is the leader
infinitely many times, so Lemma \ref{lemmaFinalization} implies:

\begin{theorem}[Liveness; Censorship Resilience]\label{thmLiveness} Every value
input in a correct proposer is eventually output by the correct nodes. \qed
\end{theorem}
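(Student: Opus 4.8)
The plan is to derive the theorem as a direct corollary of Lemma \ref{lemmaFinalization} together with the fairness property of the leader sequence (every proposer is leader of infinitely many rounds). I would argue by contradiction: suppose $v$ is input at a correct proposer $P$ but is never output by the correct nodes.

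First I would observe that a value leaves $P$'s buffer only when it is finalized and output --- the sole call that mutates \texttt{inputs} by removal sits inside \texttt{finalize}, immediately before \texttt{output}. Hence, under the assumption, $v$ remains in $P$'s buffer forever, so from the moment $v$ is received onward $P$ always has a non-empty buffer containing an unfinalized input. Let $R$ be the highest round current in any correct node at $\GST$, as in Lemma \ref{lemmaFinalization}. Since $P$ leads infinitely many rounds, there are infinitely many rounds $r > R$ with leader $P$, and in each of them $P$ has an unfinalized input. By Lemma \ref{lemmaFinalization} every such round eventually gets an accepted value and becomes committed; by the definition of \emph{finalized} its proposal is then finalized, and by the Safety theorem it is output by every correct node.

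Next I would track the effect on the buffer: each such round outputs and removes from $P$'s buffer exactly the value $P$ proposed there, namely the current head \texttt{inputs[0]}. Thus infinitely many $P$-leader rounds each drain one value from the front of $P$'s buffer. The remaining step is to conclude that $v$ \emph{itself} is eventually the removed value, i.e. that $v$ eventually reaches the head and is proposed.

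I expect this last step to be the main obstacle, precisely because of how the buffer is maintained: \texttt{on\_input} prepends (\texttt{inputs = [$v$] + inputs}), so freshly arriving inputs are placed \emph{ahead} of $v$. To close the argument I would set up a counting (potential) argument on the number of values preceding $v$: each $P$-leader round with $v$ still present removes one element from the front, and I must rule out that this quantity is replenished forever. Making this rank argument rigorous --- and, if needed, isolating the fairness condition on input arrival under which the prepend discipline cannot starve a fixed value --- is the delicate part; everything else follows mechanically from Lemma \ref{lemmaFinalization} and the infinitely-many-leaders requirement.
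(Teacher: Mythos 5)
Your reconstruction up to the last step is exactly the paper's argument: the paper's entire proof of this theorem is the one-line observation that every proposer leads infinitely many rounds, so Lemma \ref{lemmaFinalization} applies to infinitely many rounds led by the correct proposer $P$. So you have faithfully reproduced the intended route.

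The ``delicate part'' you flag is a genuine gap --- and, notably, it is a gap in the paper's own proof as well, which silently steps over it. Lemma \ref{lemmaFinalization} only guarantees that each $P$-led round after $R$ acquires an accepted value and commits, hence finalizes \emph{whatever} $P$ proposed there, namely \texttt{inputs[0]}; it does not guarantee that $v$ is ever the proposed value. Under the pseudocode's buffer discipline (\texttt{on\_input} prepends, proposals take the head), an input schedule that delivers at least one fresh input to $P$ between consecutive $P$-led rounds keeps $v$ buried forever: your rank quantity is replenished indefinitely, and no counting argument can close this without an additional hypothesis. The same nondeterminism is present in the prose rule (``has an input $v$ that has not been finalized yet \dots\ inputs $(v,s)$''), since only one input per round is possible and nothing forces the choice to ever fall on $v$. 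The standard repair, which makes your potential argument go through, is to propose the \emph{oldest} unfinalized input (equivalently, append new inputs at the tail and keep proposing the head): then the number of buffered values ahead of $v$ never increases, strictly decreases at each $P$-led round $> R$ (that round's head is finalized, removed, and by Safety output everywhere), and therefore hits zero, after which $v$ itself is proposed, finalized by Lemma \ref{lemmaFinalization} together with the definition of \emph{finalized}, and output by all correct nodes. Alternatively, the theorem as stated holds if each proposer receives only finitely many inputs, since then $v$'s rank is eventually non-increasing. In short: your proposal matches the paper, honestly exposes a step the paper asserts without justification, and needs either the oldest-first proposal rule or a finiteness/fairness assumption on arrivals to be completed.
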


\section{Reliable Broadcast and Weakly-terminating Binary Agreement Solutions}\label{sectionRbWbaSolutions}

\subsection{Bracha's Algorithm}\label{sectionBracha}

Bracha presents a simple implementation of Reliable Broadcast
in \cite{bracha1987asynchronous}. His algorithm may be summarized as
follows:

\begin{enumerate}
  \item On input $v$, the proposer sends $(\mathtt{initial}, v)$ to all nodes.
  \item Each correct validator waits for either one $(\mathtt{initial}, v)$ from
  the proposer, or a quorum of $(\mathtt{echo}, v)$ or $(\mathtt{ready}, v)$
  from $> f$ validators, then sends $(\mathtt{echo}, v)$ to everyone.
  \item Each correct validator waits for either a quorum of $(\mathtt{echo}, v)$
  or $(\mathtt{ready}, v)$ from $> f$ validators, then sends $(\mathtt{ready},
  v)$ to everyone.
  \item Each correct node waits for $(\mathtt{ready}, v)$ from $> 2f$
  validators, then outputs $v$.
\end{enumerate}

A slight change turns this into a solution for the WBA problem as well:

\begin{enumerate}
  \item Each correct validator waits for either an input $b$ from
  the proposer, or a quorum of $(\mathtt{vote}, b)$ or $(\mathtt{ready}, b)$
  from $> f$ validators, then sends $(\mathtt{vote}, b)$ to everyone.
  \item Each correct validator waits for either a quorum of $(\mathtt{vote}, b)$
  or $(\mathtt{ready}, b)$ from $> f$ validators, then sends $(\mathtt{ready},
  b)$ to everyone.
  \item Each correct node waits for $(\mathtt{ready}, b)$ from $> 2f$
  validators, then outputs $b$.
\end{enumerate}

The $\mathtt{vote}$s serve the same purpose as the $\mathtt{echo}$s, but instead
of the value received from the proposer, they contain the sender's own input
value.

These algorithms require a network where all correct validators are directly
connected to all correct nodes, but work without any synchrony assumptions.

If the maximum message delay is $\delta$, they have delay $3 \delta$ resp. $2
\delta$.

\subsection{Gossiping Quorums of Signatures}\label{sectionGossipQuorum}

A different solution that works even in larger networks where a validator cannot
expect to be directly connected to all other correct nodes is using a gossip
mechanism to disseminate signatures and wait for a quorum. Gossip protocols are
a broad subject themselves, so assume we have one that satisfies the following
properties:
\begin{itemize}
  \item If a correct node gossips a message $m$, eventually all correct nodes receive $m$.
  \item If any correct node receives a message $m$, eventually all correct nodes receive $m$.
\end{itemize}
Note that even if all correct nodes were connected to each other, this is a
stronger guarantee than we get by just sending $m$ to everyone: If a faulty node
sends $m$ to some and $m'$ to other correct nodes, a gossip protocol guarantees
that all correct nodes will receive both $m$ and $m'$.

We also need cryptographic signatures, and a setup where each validator has a
public key known to all nodes. In the following we write $(\ldots, \sigma)$ for
a message where $\sigma$ is a signature of the other fields.

Given these tools, a solution to RB is simply:
\begin{enumerate}
  \item On input $v$, the proposer signs and gossips $(\mathtt{initial}, v, \sigma)$.
  \item When a correct validator receives $(\mathtt{initial}, v, \sigma)$ signed
  by the proposer, and has not signed an $\mathtt{echo}$ yet, it signs and
  gossips $(\mathtt{echo}, v, \sigma')$.
  \item When a correct node receives $(\mathtt{echo}, v, \sigma_i)$ with a set
  of signatures $\sigma_i$ from a quorum of validators, it outputs $v$.
\end{enumerate}

And WBA:
\begin{enumerate}
  \item On input $b$, a correct validator signs and gossips $(\mathtt{vote}, b,
  \sigma)$.
  \item When a correct node receives $(\mathtt{vote}, b, \sigma_i)$ with a set
  of signatures $\sigma_i$ from a quorum of validators, it outputs $b$.
\end{enumerate}

To see that these algorithms indeed solve the RB/WBA problems, observe that each
correct validator only ever signs one $\mathtt{echo}$ or $\mathtt{vote}$ message,
and since any two quorums overlap in at least one correct validator, there can be
a quorum for at most one value. By our assumptions about the gossip
protocol, if any correct node sees such a quorum, all of them will eventually
see it. That implies RB and WBA Agreement. WBA Validity follows because a quorum
must contain signatures from at least $q - f$ correct validators. RB and WBA Weak
Termination follow because the gossip protocol guarantees to deliver a correct
proposer's unique $\mathtt{initial}$ message to everyone, and a correct
validator's $\mathtt{echo}$ or $\mathtt{vote}$.

The delay of WBA is just the time the gossip algorithm takes to deliver a message
to all correct nodes, and the delay of RB is twice that.

By our assumption about gossip, if any correct node receives the
$\mathtt{initial}$ message, all of them will. If the value $v$ is large, the RB
algorithm can thus easily avoid making lots of redundant copies by replacing the
$v$ in the $\mathtt{echo}$ messages with a cryptographic hash of $v$. Point 3
then has to be modified slightly, since a node can only output $v$ once it has
received a quorum of $\mathtt{echo}$s \emph{and} the $\mathtt{initial}$ message.

\section{Practical Considerations}

\subsection{Censorship Resilience with Unknown \texorpdfstring{$\delta$}{Delta}}

The more realistic version of partial synchrony is with an unknown maximum
message delay $\delta$, which will result in unknown RB and WBA delays $\Delta$.
In theory all protocols can be adapted to that version by slowing down their
clocks over time: the later a timer is started the more its delay is increased
compared to the known-$\delta$ variant.

In practice, however, ever-increasing timeouts would mean that the user-visible
delay caused by a single crashed or faulty proposer becomes longer and longer.
So implementations usually reset the timeout back to a lower delay whenever a
new value is finalized. If the timeout is reset to a very low value after each
finalization, this breaks the proof of Theorem \ref{thmLiveness}: It could be
that not every (or even no) correct block proposer gets their proposals
finalized, because the timer is always reset before it is their turn.

One strategy is to keep increasing the timeout until at least $k$ out of $p$
proposers got their most recent round committed. That achieves a lower level of
censorship resilience: At least $k$ proposers will eventually get all their
inputs finalized. But if more than $p - k$ proposers are faulty, it will lead to
the timeouts and user-visible delays increasing indefinitely.

Another approach is to modify the protocol: A round $r$ is \emph{orphaned} if it
has accepted a value but there is a finalized round $s > r$ of which $r$ is not
an ancestor. So an orphaned round is one that we know will never get finalized.
Now instead of tuples we use triples $(v, r, s)$ as proposal values: The third
entry specifies an orphaned \emph{uncle} round, and if the proposal gets
finalized, we output both the value accepted in $s$ and the value $v$. $(v, r,
s)$ is only acceptable if $s$ is $\bot$ or has an accepted value. But in
addition, we only input $1$ in WBA if we either do not have an orphaned round or
$s$ points to the lowest orphaned round we know of. This effectively forces
proposers in later rounds to acknowledge each orphaned value as an uncle and
indirectly finalize it, too. Thus the algorithm remains fully censorship
resilient even if the timeout is reset whenever a round is finalized. Note that
this complicates the liveness proof, but since eventually all correct nodes will
agree on which is the lowest orphaned round, it still works.

\subsection{Validation}

The validity of a proposed value often depends on its ancestors in practice:
E.g. a client request to a database should only be executed once, so the
proposal $(v, r)$ is only valid if none of its ancestors already contains the
value $v$. Or a smart contract call is only valid if the caller has enough funds
to pay the transaction fees. Or a block in a blockchain must contain its
parent's hash.

In order to account for these various details, one might extend the definition of
accepted in order to only accept valid values. Be careful to avoid proposing
values which would be considered invalid in your implementation, of course.

\subsection{Spam}

In theory the protocol runs an infinite number of RB and RV instances. But
correct nodes make inputs to $\RB{r}$ or $\WBA{r}$ once a round $\geq r$ is
current. Implementation should provide a way for correct nodes to reject
incoming messages belonging to implausibly high rounds, e.g. by making the
sender queue them until they are ready.

\subsection{Minimum Delays}

The protocol can be extended to respect a configured start time and a minimum
delay between a value and its parent by adding two additional conditions for
incrementing our round number and starting the timer: Round $r$ becomes current
when all rounds $s < r$ are skippable or have an accepted value \emph{and}:
\begin{itemize}
  \item The current time is at least the configured start time for the network.
  \item At least the minimum delay has passed since the highest timestamp of
    any accepted value.
\end{itemize}

\section{Comparison with other Protocols}\label{sectionComparison}

The protocol discussed in this paper has similarities to many other known
protocols. HoneyBadgerBFT \cite{miller2016honey}, DBFT \cite{crain2018dbft} and
Aleph \cite{gkagol2019aleph} are examples of leaderless protocols that also use
RB as a subprotocol.

HoneyBadgerBFT and DBFT also use full BA, which guarantees termination in all
cases.  As these protocols are leaderless, instead of one RB and BA instance per
round, there is one instance per round \emph{per proposer}.  Assuming enough
proposers are correct, it is guaranteed that a certain number of RB instances
will terminate. This replaces the timer: Validators input $1$ in a BA if the
corresponding proposer's RB was among the first to deliver a value, otherwise
they input $0$. HoneyBadgerBFT then combines all proposed values where BA output
$1$, whereas DBFT just uses one of them. DBFT uses a BA solution that requires
partial synchrony, whereas HoneyBadgerBFT is asynchronous thanks to a BA
protocol based on threshold cryptography.

Aleph uses an asynchronous threshold cryptography-based BA in addition to RB,
but reduces communication complexity by arranging the proposed values in a
directed acyclic graph (DAG) where every proposal points to all proposals the
sender has seen before, and works on an implicit message in several BA instances
simultaneously.

While they share with our protocol the property of making subprotocols explicit,
these three algorithms are vastly more complex than ours, which belongs to the
family of leader-based partially synchronous protocols following the Lock-Commit
paradigm \cite{abraham2020lock}. It can be compared to a number of other members
of that family, e.g. HotStuff \cite{yin2018hotstuff}, SBFT \cite{gueta2019sbft},
Doomslug \cite{skidanov2019doomslug}, PBFT \cite{castro1999practical},
Tendermint \cite{buchman2018latest}. In this context, we will use the terms
\emph{block} and \emph{value} synonymously.

SBFT trades a slightly lower fault tolerance for a fast-path that finalizes
blocks in just one round of messages if \emph{almost all} validators are
correct. Doomslug also finalizes blocks in a single round, but only with fault
tolerance 1, and then later finalizes all blocks with a higher fault tolerance.

HotStuff and Libra support \emph{pipelining}, similar to Aleph: To reduce
overall communication complexity per block and to minimize the time between
subsequent blocks, consensus for one of them takes several subsequent rounds to
be reached. The downside is that the time to consensus is tied to how long it
takes to broadcast the next proposals, as well as to any minimum time between
blocks that one may want to enforce in practice.

PBFT and Tendermint do not do pipelining but try to finalize each block before
the next one, which requires multiple rounds of messages.

By allowing the next proposal as soon as RB has output, but having a separate
WBA instance per round, we do something in between: Very few RB messages have to
be exchanged between subsequent blocks, but WBA can finalize the value without
waiting for later rounds.

Apart from that, if the gossip-based RB and WBA implementations from section
\ref{sectionGossipQuorum} are used, Tendermint is closest to our protocol. So we
will compare them in detail:

First of all, the Tendermint paper describes a setting where a new instance of
the protocol is run for every block, and the protocol outputs only once. But
it can easily be adapted to make multiple outputs: Instead of re-proposing the
valid value (see below), a proposer could create a new child of that value. For
the purpose of our comparison we will work with that version.

In Tendermint, every round is subdivided into three phases:

\begin{itemize}
  \item First the leader gossips a $\mathtt{proposal}$, which also includes a
  value and the intended parent round --- the proposer's highest \emph{valid
  round} (see below).
  \item Then all validators send a $\mathtt{prevote}$ with the hash of the
  proposal they received, or with $\mathtt{nil}$ if they received none.
  \item Finally the validators send a $\mathtt{precommit}$ with the hash of the
  proposal they saw a quorum of $\mathtt{prevote}$s for, or $\mathtt{nil}$ if
  none.
\end{itemize}

Each phase has its own timeout: the first one is started when the round begins,
but the prevote resp. precommit timeouts start when a quorum of
$\mathtt{prevote}$ resp. $\mathtt{precommit}$ messages are seen, even if their
content does not match.

The first two phases clearly correspond to RB, $\mathtt{proposal}$ to
$\mathtt{initial}$  and $\mathtt{prevote}$ to $\mathtt{echo}$. However there is
no analog to a $\mathtt{nil}$ $\mathtt{prevote}$ in our protocol. Instead, RB
simply does not guarantee to terminate, and there is one single timeout for it.

The third phase corresponds to WBA, $\mathtt{precommit}$ to $\mathtt{vote}$, the
proposal hash to $1$ and $\mathtt{nil}$ to $0$. The difference here is that
Tendermint has a timer for this phase and the next proposer has to wait for
another quorum or timeout even if there is a quorum of $\mathtt{prevote}$s.

The \emph{valid round} in Tendermint is the highest one in which we have seen a
quorum of $\mathtt{prevote}$s for the same proposal. It corresponds to the
highest round with an accepted value. The \emph{locked round} in a Tendermint
node is the highest round for which \emph{that node} has sent a
$\mathtt{precommit}$ for a proposal. This is subjective and replaces the
skipping rule: A node will refuse to $\mathtt{prevote}$ for a new proposal if
our locked round is between it and its parent round.

\section{Conclusion}

We believe that isolating the RB and WBA subprotocols makes this partially
synchronous Atomic Broadcast protocol particularly easy to understand, and gives
intuitive meaning to its messages, without compromising on desirable properties
regarding performance and fault tolerance.

\bibliographystyle{plain}
\bibliography{consensus}

\appendix

\end{document}